\newtheorem{theorem}{Theorem}
\newtheorem*{definition}{Definition}
\begin{document}

\pagestyle{empty}

\title{Wireless Energy and Information Transfer in Networks with Hybrid ARQ}
%\vspace{-1cm}
\author[1]{Mehdi Salehi Heydar Abad \thanks{This work was in part supported by EC H2020-MSCA-RISE-2015 programme under grant number 690893, and the Egyptian National Telecommunications Regulatory Authority.}}
\author[1]{Ozgur Ercetin}
\author[2,3]{Tamer Elbatt}
\author[2,4]{Mohammed Nafie}
\affil[1]{Faculty of Engineering
and Natural Sciences, Sabanci University}
\affil[2]{Electronics and Communications Engineering Dept., Cairo University, Giza 12613, Egypt}
\affil[3]{Computer Science and Engineering Dept., The American University in Cairo, AUC Avenue, New Cairo 11835, Egypt}
\affil[4]{Wireless Intelligent Networks Center, Nile University}
\affil[ ]{\textit{\{mehdis,oercetin\}@sabanciuniv.edu,\ telbatt@ieee.org,\ mnafie@ieee.org}}

\maketitle

\newtheorem{lemma}{Lemma}
\newtheorem{corollary}{Corollary}
\thispagestyle{empty}

\begin{abstract}
In this paper, we consider a class of wireless powered communication devices using hybrid automatic repeat request (HARQ) protocol to ensure reliable communications. In particular, we analyze the trade-off between accumulating mutual information and harvesting RF energy at the receiver of a point-to-point link over a time-varying independent and identically distributed (i.i.d.) channel. The transmitter is assumed to have a constant energy source while the receiver relies, solely, on the RF energy harvested from the received signal. At each time slot, the incoming RF signal is split between  information accumulation and energy accumulation with the objective of minimizing the expected number of re-transmissions.  A major finding of this work is that the optimal policy minimizing the expected number of re-transmissions utilizes the incoming RF signal to either exclusively harvest energy or to accumulate mutual information. This finding enables achieving an optimal solution in feasible time by converting a two dimensional uncountable state Markov decision process (MDP) with continuous action space into a countable state MDP with binary decision space.
\end{abstract}

%It should be noted that \cite{varshney2008transporting} and \cite{grover2010shannon} assume that a wireless signal can be exploited for decoding as well as harvesting energy, simultaneously. However, this cannot be realized with today's technology due to practical circuit limitations \cite{SWIPT_prac}.
\section{Introduction}
In simultaneous wireless information and power transfer (SWIPT), the incoming RF signal is used for both energy harvesting and decoding of information bits. The concept was first introduced by Varshney in \cite{varshney2008transporting},  characterizing the rates at which energy and reliable information can be transferred over a single point-to-point noisy link. It was later extended for frequency-selective channels with additive white Gaussian noise (AWGN) in \cite{grover2010shannon}. In \cite{zhang2013mimo}, the authors examined separated and co-located information and energy receivers in a multiple-input multiple-output (MIMO) wireless broadcast system. Specifically, for the co-located receiver case, two practical designs are investigated, namely time-switching (TS) and power splitting (PS). In TS policies, the incoming RF signal is either entirely utilized for energy or information purposes, whereas in PS policies the incoming signal is divided into two streams; one stream being utilized for energy and the other for information. 

In \cite{liangliu}, the optimal PS policy at the receiver was characterized to balance various trade-offs between the maximum ergodic capacity and the maximum average harvested energy in a single-input-single-output system. In addition, the optimal TS policy at the receiver is characterized for a point-to-point link over a narrow band flat-fading channel in \cite{liu2013wireless}.

In inherently error-prone wireless communications systems, re-transmissions, triggered by decoding errors, have a major impact on the energy consumption of wireless devices. Hybrid automatic repeat request (HARQ) schemes are frequently used to reduce the impact of re-transmissions by controlling them using various channel coding techniques. Nevertheless, this reduction comes at the expense of extra processing energy associated with the enhanced error-correction decoders. A receiver employing HARQ encounters two major energy consuming operations: (1) sampling or Analog-to-Digital Conversion (ADC), which includes all RF front-end processing, and (2) decoding. The energy consumption attributed to sampling, quantization and decoding plays a critical role in energy-constrained networks which makes their study a non-trivial problem. The work in \cite{de2014performance} investigated the performance of HARQ over an RF-energy harvesting point-to-point link, where the power transfer occurs over the downlink and the information transfer over the uplink. The authors studied the use of TS when two HARQ mechanisms are used for information transfer; Simple HARQ (SH) and HARQ with Chase Combining (CC) \cite{harq}. Recently, \cite{maha} studies the performance of HARQ in RF energy harvesting receivers. Particularly, the receiver employs a specific time switching policy to either harvest energy or accumulate mutual information in order to minimize the number of re-transmissions. However, it does not consider an accurate model for the energy consumption of the receiver components.

%Unlike \cite{zorzi}, the receiver does not rely on the channel state information for minimizing the number of re-transmissions.

In this work, we consider a point-to-point link where a transmitter employs HARQ to deliver a message reliably to the receiver. The receiver has no energy source, and thus, it relies on harvesting RF energy from the same signal bearing information. The channel is time-varying where the amount of energy harvested and information collected varies depending on the quality of the channel.  The receiver aims to split the incoming RF signal between energy harvesting and information decoding so that the expected number of re-transmissions is minimized.  We develop a novel Markovian framework to prove that the optimal policy is a TS policy. As a consequence of this finding, we convert a two dimensional uncountable state Markov decision process (MDP) with continuous action space into a countable state MDP with binary decision space, and thus enabling us to use value iteration algorithm (VIA) to obtain the minimum expected number of re-transmissions in feasible time. Through numerical results, we show that the optimal policy is not unique and propose three heuristic policies to achieve the same performance as VIA. 

%\footnote{A recent work in \cite{powifi} demonstrated energy transfer over conventional IEEE 802.11 wireless networks to power low-power sensor and devices using the off the shelf hardwares without significantly comprising the network performance.}

\section{System Model}\label{SystemModel}
\subsection{Channel Model and Receiver Architecture}
Consider a point-to-point time varying wireless link between a transmitter-receiver pair. 
The wireless channel is modeled according to an i.i.d. two-state block fading model where the states are GOOD and BAD. Note that the two-state channel process is an approximation of a more general multi-state time varying channel, where each state of the channel supports a maximum transmission rate. Here, we employ two-state channel process due to its analytical tractability. Let $G_t\in\left\{0,\ 1\right\}$ be the state of the channel at time slot $t$ where  a BAD state is denoted by $0$ and a GOOD state is denoted by $1$. We let the probability that the channel is in a GOOD state be $\lambda$, i.e., $\mathds{Pr}\left[G_t=1\right]=\lambda$. Let $g_i$ be the instantaneous complex channel gain corresponding to state $i=0,1$. We assume that the channel state information (CSI) is neither available at the transmitter nor the receiver due to the high computational and energy costs of transmitting and receiving a pilot signal necessary for measuring the CSI.

We consider a communication scheme where the transmitter is connected to a power source with an unlimited energy supply. The receiver is equipped with separate rectifier circuit for EH and a transceiver for information decoding (ID), both connected to the same antenna. We consider a co-located EH and ID architecture in which the EH and ID circuits share the same antenna to enable a compact structure. The incoming RF signal is fed to the EH and ID circuits according to time switching (TS) and power splitting (PS) architectures. 

%In TS  architecture, the incoming RF signal is either fed into the energy harvester or the information receiver. In the PS architectures, the received RF signal is split into two streams with varying signal powers and fed into both the information receiver and RF energy harvester.

Time is slotted and each slot has a length of $N$ channel uses. We assume that $N$ is sufficiently large so that we can apply information theoretic arguments. The instantaneous achievable rate of the receiver is the maximum achievable mutual information between the output symbols of the transmitter and input symbols at the receiver. Let us denote the achievable rate of the receiver by $R(t)$ at time $t$. As $N\to \infty$, $R(t)$ approaches the Shannon rate, and it can be computed as:
\begin{align}
R(t) = \log(1 + P |g(t)|^2),
\end{align}
where $g(t)\in \left\{g_0, g_1\right\}$ is the channel gain at time $t$ and $P$ is the noise-normalized transmit power of the transmitter. Let  $R_1$ and $R_2$ be the achievable rates corresponding to channel states GOOD and BAD, respectively. In particular
\begin{align}
R_1 = \log(1 + P |g_1|^2),\label{R1R2a}\\
R_2 = \log(1 + P |g_0|^2).\label{R1R2b}
\end{align}

Since the instantaneous channel states are not known prior to transmission, for reliability, we employ an HARQ scheme based on mutual information, namely HARQ with incremental redundancy (IR) \cite{wicker1995error}. Let us denote a message of the transmitter by $W\in \left\{1,2,\ldots,2^{NC}\right\}$, where $C$ denotes the rate of the information. Every incoming transport layer message into the transmitter is encoded by using a mother code of length $MN$ channel uses. The encoded message, $\mathbf{x}$, is divided into $M$ blocks, each of length $N$ channel uses, with a variable redundancy and it is represented by $\mathbf{x}=[x^1,\ldots,x^M]$. Let us assume that $x^1$ is transmitted at $t_1$. If $x^1$ is successfully decoded, then the receiver sends a 1-bit, error-free, zero-delay, Acknowledgement (ACK) message, otherwise, the transmitter times out after waiting a certain time period. In case of no ACK received, the transmitter transmits $x^2$ at time slot $t_2$ and the receiver combines the previous block $x^1$ with $x^2$. This procedure is repeated until the receiver accumulates $C$ bits of mutual information or maximum blocks of information, $M$, is sent. We assume that, $M$ is chosen sufficiently large so that the probability of decoding failure, due to exceeding the maximum number of re-transmissions, is approximately equal to zero. With HARQ-IR scheme, after $r$ re-transmissions, the amount of accumulated mutual information at the receiver is  $\sum^{r}_{k=1} R(t_k)$. The receiver, given that it has sufficient energy, can perform a successful decoding attempt after $r$ re-transmissions, if the amount of accumulated mutual information exceeds the information rate of the transmitted message, i.e., $\sum^{r}_{k=1} R(t_k)\geq C$. We assume that each message is encoded at rate $R_1$ i.e., $C=R_1$ so that a transmission  in a GOOD channel state carries  all the information needed for decoding

\subsection{Energy Harvesting and Consumption Model}
In the following, we assume that the receiver has a sufficiently large battery and memory, so that there is no energy or information overflow. The receiver utilizes a power splitting policy, where $\rho(t)\in [0,1]$ denotes the power splitting parameter at the beginning of time slot $t$. Note that $\rho(t)=0$ indicates that the received signal is used solely for mutual information accumulation, and $\rho(t)=1$ indicates that the received signal is used solely for harvesting energy. Any $0<\rho(t)<1$ refers to the case where the received signal is used for both harvesting energy and mutual information accumulation. Note that TS can be considered as a special case of PS with $\rho(t)\in \left\{0,1\right\}$.  

%Following the On-Off channel model used, we assume that Rf-energy can only be harvested and used in units of $\alpha$ joules. 
%($0\leq e\leq E_d$)
We incorporate a simplified EH model, which facilitates the formulation of a tractable optimization problem.  In this model, the receiver harvests a maximum of $e\geq 1$  energy units in the GOOD state and zero units during the BAD state\footnote{The maximum energy is harvested if the received signal is completely directed to the EH circuit, i.e., $\rho(t)=1$.}. The reason that no energy can be harvested during a BAD state is because in a typical EH device there are two stages: a rectifier stage that converts the incoming alternating current (AC) radio signals into direct current (DC); and a DC-DC converter that boosts the converted DC signal to a higher DC voltage value. The main limitation in an EH device is that every DC-DC converter has a minimum input voltage threshold below which it cannot operate. Hence, when the channel is in a BAD state, the input voltage is below the threshold of the DC-DC converter so no energy can be harvested. Albeit the receiver cannot harvest any RF energy in a BAD state, it can still accumulate mutual information since ID circuit operates at a lower power sensitivity,  e.g., $-10$ dBm for EH and $-60$ dBm for ID circuits \cite{EHsurvey}.

The energy consumption of HARQ was recently investigated in \cite{rosas2016optimizing}. The  energy is consumed at the start up of the receiver, during decoding, for operating passband receiver elements (low-noise amplifiers, mixers, filters, etc.), and for providing feedback to the transmitter.   In order to develop a tractable analytical model, we combine the individual costs of energy into two parameters only:  the receiver consumes $E_d\geq 1$ energy units for a decoding attempt and 1-energy unit for each mutual information accumulation event per time slot\footnote{One energy unit is normalized to the energy cost of operating the RF transceiver circuit during one time slot.}, i.e., operating the passband receiver elements.

%\vspace{-0.2 cm}
\section{Expected Number of Re-transmissions}
\label{MCF}
The receiver requires at least $E_d$ units of energy and $R_1$ bits of information before it can successfully decode the transmitted packet. The objective is to optimally determine the power splitting ratio $\rho(t)$ between EH and ID so that the transmission is successfully decoded with minimum delay at the receiver. Note that $\rho(t)$ depends on the current battery level, $b$, and the amount of information accumulated, $m$.

%Our objective is to determine an optima scheduler at the receiver that takes into account the system state $(b,\ m)$, where $b$ is the total residual energy in the battery and $m$ is the accumulated mutual information, to make a power splitting decision $\rho\in [0,\ 1]$ at every slot $t$ so that the total number of re-transmissions is minimized. In the following, we formally define the scheduling policy of the receiver.

\begin{definition}
A scheduling policy $\boldsymbol{\pi}= (\rho(1), \rho(2), \ldots,)$ is
a sequence of decision rules as such the $k$th element of $\boldsymbol{\pi}$
determines the power splitting ratio at $k$th time slot based on the observed system state $(b,\ m)$ at the beginning of this
time-slot for $t\in \{1,2,\ldots\}$. Similarly, a tail scheduling
policy $\boldsymbol{\pi}_t = (\rho(t), \rho(t+1), \ldots)$ is a sequence of decision rules
that determines the $\rho(t)$ for the time slots from $t$ to $\infty$.
\end{definition}

The problem can be mathematically modeled as a two-state Markov chain. Let the states of the Markov chain be  $(b,\ m)$, where $b$ is the total residual battery level and $m$ is the total accumulated mutual information normalized by $R_2$. For clarity of presentation, in the rest of the paper, we assume that $R_2=1$. 

\subsection{Dynamic Programming Formulation}

Let $f^{\boldsymbol{\pi}}(t)\in\{0,1\}$ be an indicator function taking
a value of $0$ if the message can be decoded at the end of slot
$t$ under policy $\boldsymbol{\pi}$, and a value of $1$ otherwise. Then,
the optimization problem we aim to solve is given as,
\begin{align}
\min_{\boldsymbol{\pi}} \sum_{t=0}^{\infty} f^{\boldsymbol{\pi}}(t).
\end{align}

Let $V^{\boldsymbol{\pi}}(b,m)$
be the expected discounted reward with initial state $S_0 =(0,0)$
under policy $\boldsymbol{\pi}$ with discount factor $\beta \in [0, 1)$. The expected discounted reward has the following expression
\begin{align}
V^{\pi}(b,\ m) = \mathds{E}^{\boldsymbol{\pi}}\left[\sum^{\infty}_{t=0}\beta^{t}U(S_{t},\rho(t))|S_{0}=(b,\ m)\right],\label{Vdef}
\end{align}
where $\mathds{E}^{\boldsymbol{\pi}}$ is the expectation with respect to the policy $\boldsymbol{\pi}$, $t$
is the time index, $\rho(t) \in [0,1]$ is the action chosen at time
$t$, and $U(S_{t},\rho(t))$ is the instantaneous reward acquired when
the current state is $S_t$.

%In a Markov chain, \emph{absorbing} states are those states in which once entered, they are never left.  A \emph{transient} state is a state that each time the process enters it, with a positive probability, the process will never again enter that state. 
%Note that the states in which the receiver has both the required amount of energy and mutual information for decoding $(i.e., B\geq m,\ I\geq R_1)$ are absorbing states since the receiver is able to successfully decode the message without further re-transmissions. All other states are transient states. Starting from any initial state, we are interested in calculating  the expected number of transitions until hitting an absorbing state. Note that the expected number of re-transmissions and the expected number of transitions until hitting an absorbing state are equivalent. Let $k_{B,I}$ be the minimum expected number of transitions needed to hit an absorbing state when the Markov chain starts from state $(B,\ I)$. For now, let us assume that the exists an optimal policy which achieves $k_{B,I}$. 

%For the given system model, at any given state $(B,I)$, the receiver aims to find a power splitting ratio
%policy over all time slots $t=1,2,\dots,N$ that minimize the average number of re-transmissions required for successful message decoding.  We assume that decoding is attempted at the end of current slot $t$. 

In the rest of the paper, we use $\rho(t)$ and $\rho(b,m)$ interchangeably by assuming that at time slot $t$, the system is at state $(b,m)$. The battery is recharged with incoming RF signal depending on the value of the power split ratio $\rho(t)$.  Meanwhile, one unit of energy is consumed in order to accumulate non-zero bits of mutual information.  Hence, the evolution of the battery state is characterized as follows:
\begin{align}
B(t)=&\left\{
\begin{array}{ll}
B(t-1)+\rho(t) e-\mathds{1}_{\rho(t)\neq 1},& \text{if}\  G_t=1\\
B(t-1)-\mathds{1}_{\rho(t)\neq 1},& \text{if}\  G_t=0
\end{array}
\right.,
\end{align}
where $\mathds{1}_{\rho(t)\neq 1}=0$, if $\rho(t)=1$, and $\mathds{1}_{\rho(t)\neq 1}=1$, otherwise\footnote{When $\rho\neq1$, the receiver consumes 1 unit of energy to operate its transceiver.}.

According to (\ref{R1R2a}) and (\ref{R1R2b}), the transmit power is equal to $P = \frac{2^{R_1}-1}{|g_1|^2} = \frac{2^{R_2}-1}{|g_0|^2}$. At the power splitter, $1-\rho(t)$ portion of the received power is directed into the ID, so the maximum achievable mutual information accumulation is:
\begin{align}
R(t) = \log(1 + g(t)P(1-\rho(t)))\label{splittedR}
\end{align} 
Inserting the value of $P$ in (\ref{splittedR}) for GOOD and BAD channel states gives the mutual information accumulation in these states respectively for a given power splitting ratio $\rho$ as
\begin{align}
R^H(\rho)= \log(\rho+(1-\rho) 2^{R_1}),\\
R^L(\rho)= \log(\rho+(1-\rho) 2^{R_2}).
\end{align}

 Thus, the accumulated mutual information, $I(t)$, evolves as:
\begin{align}
I(t)=&\left\{
\begin{array}{ll}
\min(I(t-1)+R^H(\rho(t)),R_1),& \text{if}\  G_t=1\\
\min(I(t-1)+R^L(\rho(t)),R_1),& \text{if}\  G_t=0
\end{array}
\right..\label{I}
\end{align}

The instantaneous reward is zero if the message can be
correctly decoded, and it is minus one otherwise. Note that the
decoding operation is successful if and only if the accumulated
mutual information is above a certain threshold, and the
battery level is sufficient to decode the message. Hence, the
instantaneous reward is given as follows:
\begin{align}
U(S_{t},\rho(t))=&\left\{
\begin{array}{ll}
0,& \text{if}\  B_t\geq E_d,\ \text{and}\ I(t)\geq R_1,\\
-1,& \text{if}\  \text{otherwise}.
\end{array}
\right..
\end{align}
Define the value function $V(b,m)$ as
\begin{align}
V(b,\ m) &= \max_{\pi}V^{\pi}(b,\ m),\ \forall b\in[0,\infty),\ \forall m\in\left[0,\ R_1\right]\label{Vmax}.
\end{align}
The value function $V(b, m)$ satisfies the Bellman equation
\begin{align}
V(b, m) = \max_{0\leq \rho \leq 1}V_\rho(b,m)
\end{align}
where $V_\rho(b,m)$ is the cost incurred by taking action $\rho$ when
the state is $(b, m)$ and is given by
\begin{align}
V_{\rho}(b,m)=U((b,\ m),\rho)+\beta\mathds{E}\left[V(\acute{b},\ \acute{m})|S_{0}=(0,\ 0)\right], \label{actionvalue}
\end{align}
where $(\acute{b}, \acute{m})$ is the next visited state and the expectation is
over the distribution of the next state. The use of expected
discounted reward allows us to obtain a tractable solution, and
one can gain insights into the optimal policy when $\beta$ is close
to $1$. Then, one can apply VIA to obtain the optimal discounted reward. However, this problem suffers from the curse of dimensionality as it is a two dimensional uncountable state Markov decision process (MDP) with continuous actions at every state. Also, letting $\beta\rightarrow 1$, to approximate the average reward, extremely slows the algorithm to the point of infeasibility \cite{mehdigilbert}. Hence, in the following, we propose a novel approach to gain insights into the structure of the optimal policy.

\subsection{Absorbing Markov Chain Analysis}

The Markov chain describing the operation of our system is an {\em absorbing} Markov chain, where all states except those $(b, m)$ where $b\geq E_d$, and $m\geq R_1$ are transient states.  The absorbing states are those where the receiver has both sufficient energy and information accumulated to correctly decode. In an absorbing Markov chain, the expected number of steps taken before being absorbed in an absorbing state characterizes \emph{mean time to absorption}. Hence, mean time to absorption starting in a given transient state $(b,m)$ provides the number of re-transmissions until successful decoding starting from this state. It should be noted that the receiver is blind to the CSI before choosing the power splitting ratio. However, after it decides to sample the incoming RF signal for mutual information accumulation, based on the received power, the amount of the information in the sampled portion of the RF signal is revealed to the receiver.

In a finite absorbing chain, starting from a transient state, the chain makes a finite number of visits to some transient states before its eventual absorption into one of the absorbing states. Hence the mean time to absorption of the chain, starting from transient state $i$  initially, is the sum of the expected numbers of visits made to transient states. In the following, we perform first-step analysis, by conditioning on the first step
the chain makes after moving away from a given initial state to obtain the mean time to absorption. Let $k_{b,m}$ be the expected number of transitions needed to hit
an absorbing state when the Markov chain starts from state $(b,\ m)$. 

Let us first consider the trivial case when the battery has less than one unit of energy, i.e., $b<1$, in which case the receiver must harvest the incoming RF signal.  In this case, the mean time to absorption starting from an initial state $(b,m)$ is
\begin{equation} \label{kBless1}
\begin{split}
k_{b,m} &= 1+ \lambda k_{b+e,m}+(1-\lambda) k_{b,m} \\
 &=\frac{1}{\lambda} + k_{b+e,m},\hspace{1cm} \text{if}\ b<1.
\end{split}
\end{equation}

Note that in (\ref{kBless1}), one slot is needed to harvest energy, and depending on the channel state in that slot,  the battery state either transitions to $b+e$ or remains the same. Similarly, if the amount of accumulated mutual information is $R_1$, there is no point in further accumulating mutual information since the receiver has sufficient mutual information to decode the incoming packet. Hence, 
\begin{equation} 
\begin{split}
k_{b,m} &= 1+ \lambda k_{b+e,m} + (1-\lambda)k_{b,m} \\
 &=\frac{1}{\lambda} + k_{b+e,m},\hspace{1cm} \text{if}\ m= R_1.
\end{split}
\end{equation}

The following lemma plays an important role in establishing the structure of the optimal policy. 

\begin{lemma}
\label{lemma1}
For any $E_d-i\cdot e \leq b < E_d-(i-1)\cdot e$ such that $i=1,\ldots,E_d$, given that $m=R_1$, the mean time to absorption is given by, $k_{b,R_1} = \frac{i}{\lambda}$.
\end{lemma}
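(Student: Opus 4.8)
The plan is to induct on $i$. The base case $i=1$ corresponds to $E_d - e \le b < E_d$, i.e., the receiver has enough energy to decode after exactly one more GOOD slot of harvesting (recall $e \ge 1$), and already has $m = R_1$ bits accumulated so no further information is needed. Since $b < E_d$, the receiver cannot yet decode; since $m = R_1$, the only sensible action is to set $\rho = 1$ and harvest. By the same first-step reasoning used in equation (\ref{kBless1}), one slot is spent, with probability $\lambda$ the channel is GOOD and the battery jumps to $b + e$, and with probability $1-\lambda$ it is BAD and the state is unchanged. Because $b + e \ge E_d$ and $m = R_1$, the state $(b+e, R_1)$ is absorbing, so $k_{b+e,R_1} = 0$. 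Solving $k_{b,R_1} = 1 + \lambda \cdot 0 + (1-\lambda) k_{b,R_1}$ gives $k_{b,R_1} = 1/\lambda$, establishing the base case.

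For the inductive step, assume the claim holds for $i-1$, and take $b$ with $E_d - i\cdot e \le b < E_d - (i-1)\cdot e$. Again $b < E_d$ forces $U = -1$ and, since $m = R_1$, the optimal (indeed only non-wasteful) action is $\rho = 1$: spending the energy unit to sample for information would be pointless because $m$ is already at its cap $R_1$, and \eqref{I} shows that with $m=R_1$ the information state stays at $R_1$ regardless. Applying first-step analysis,
\begin{align}
k_{b,R_1} = 1 + \lambda\, k_{b+e,R_1} + (1-\lambda)\, k_{b,R_1},
\end{align}
so $k_{b,R_1} = 1/\lambda + k_{b+e,R_1}$. Now $b + e$ satisfies $E_d - (i-1)\cdot e \le b + e < E_d - (i-2)\cdot e$, so by the induction hypothesis $k_{b+e,R_1} = (i-1)/\lambda$, and therefore $k_{b,R_1} = 1/\lambda + (i-1)/\lambda = i/\lambda$, as required.

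The only genuinely non-routine point is justifying that the optimal policy in this regime is to harvest exclusively ($\rho = 1$). One must argue that when $m = R_1$ there is no benefit to directing any portion of the signal to the ID circuit — this follows from \eqref{I} (the information cap) together with the observation that choosing $\rho \ne 1$ strictly wastes one energy unit per slot (via the battery evolution equation and the $\mathds{1}_{\rho(t)\neq 1}$ term), which can only delay reaching the energy threshold $E_d$ and hence weakly increases the time to absorption. A clean way to handle this is to note that the value function / mean-time-to-absorption is monotone in $b$ (more energy never hurts), so among all actions the one that maximizes the expected next battery level while not wasting energy is optimal, and that action is $\rho = 1$. Once that is in hand, the recursion and the induction are immediate; the indexing of the interval for $b+e$ is the only bookkeeping to be careful with.
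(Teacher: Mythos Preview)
Your proposal is correct and follows essentially the same route as the paper: induction on $i$, with the base case $E_d - e \le b < E_d$ handled by noting that $(b+e,R_1)$ is absorbing, and the inductive step obtained from the same first-step recursion $k_{b,R_1} = 1/\lambda + k_{b+e,R_1}$. Your added paragraph justifying $\rho=1$ via the information cap in \eqref{I} and monotonicity of the mean time to absorption in $b$ is slightly more explicit than the paper, which simply asserts that harvesting is optimal when $m=R_1$.
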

\begin{proof}
The proof is by induction. For the base case assume that the claim is true for $i=1$ such that $E_d-e\leq b < E_d$. Note that since $m=R_1$, the optimal decision is to use incoming RF signal only for harvesting energy, i.e., $\rho^*(b,\ R_1)=1$. Thus,
\begin{align}
k_{b,R_1} = 1 + \lambda k_{b+e,R_1}+ (1-\lambda) k_{b,R_1}.
\end{align}
For $E_d-e\leq b < E_d$, if the channel is GOOD then the Markov chain transitions into the absorbing state state $(b+e,\ R_1)$, so $k_{b+e,R_1}=0$. Hence, $k_{b,R_1}=\frac{1}{\lambda}$ and thus, the lemma holds for $i = 1$. In the induction step assume that the lemma is true for some $i=n$, i.e., $k_{b,R_1}=n/\lambda$ for $E_d-n\cdot e\leq b<E_d-(n-1)\cdot e$. The mean time to absorption for the case $n+1$ is:
\begin{align}
k_{b,R_1} =& 1 + \lambda k_{b+e,R_1}+ (1-\lambda)k_{b,R_1},\nonumber\\
&\hspace{0.5cm}\text{for}\hspace{0.2cm} E_d-(n+1)e\leq b<E_d-nl,
\end{align}
which reduces to $k_{b,R_1} = \frac{n+1}{\lambda}$ for $E_d-(n+1)\cdot e\leq b<E_d-n\cdot e$. Thus, the lemma holds by induction.
\end{proof}

We will use Lemma \ref{lemma1}  to show that the optimal policy minimizing the mean time to absorption \emph{does not}  split the incoming RF signal. In order to show this, let us define two tail policies $\boldsymbol{\pi}^i_t=(a_i,\boldsymbol{\pi}_{t+1})$, $i=S,D$ taking different actions, $a_i$ in the current slot, but following the same set of actions, $\boldsymbol{\pi}_{t+1}$ afterwards\footnote{Note that $(a_i,\boldsymbol{\pi}_{t+1})$ defines a tail policy obtained by concatenating action $a_i$ in the current slot with tail policy $\boldsymbol{\pi}_{t+1}$.}. Let policy $\boldsymbol{\pi}^S_t=(\rho, \boldsymbol{\pi}_{t+1})$ be a tail policy that always splits the incoming RF energy, i.e., $0<\rho<1$, except when $B(t)<1$ or $I(t)=R_1$, when it only harvests energy.
%uses the incoming RF signal to completely harvest energy or to accumulate mutual information but not both. 
Assume that the state of the system is $(b,\ m)$ at time slot $t$. Then, the mean time to absorption for tail policy $\boldsymbol{\pi}^S_t$ is:
\begin{align}
k^{\boldsymbol{\pi}^S}_{b,m} = 1 + \lambda k_{b-1+\rho e,m+R^H(\rho)} + (1-\lambda) k_{b-1,m+R^L(\rho)},\label{k_pi}
\end{align}
where $k_{x,y}$ is the mean time to absorption of policy $\boldsymbol{\pi}_{t+1}$ beginning at state $(x,y)$. Note that with probability $\lambda$ the channel is in  GOOD state and hence $\rho\cdot e$ units of energy is harvested\footnote{We assume that the energy harvesting circuit is generating energy linearly proportional to the energy of the incoming RF signal.}. However, one unit of energy is spent by operating the transceiver to accumulate $R^H(\rho)$ bits of mutual information. Meanwhile, with probability $1-\lambda$ the channel is in  BAD state, and no energy is harvested, but the transceiver still consumes one unit of energy while accumulating $R^L(\rho)$ bits of mutual information.
Meanwhile, under tail policy $\boldsymbol{\pi}^D_t$ the RF signal is never split at time slot $t$, but rather, it is completely used for  mutual information accumulation except when $B(t)<1$ or $I(t) = R_1$ when it harvests energy only. One can calculate $k^{\boldsymbol{\pi}^D}_{b,m}$ as follows:
\begin{align}
k^{\boldsymbol{\pi}^D}_{b,m} = 1 + \lambda k_{b-1,R_1} + (1-\lambda)k_{b-1,m+1}.\label{kalpha}
\end{align}

%\vspace{-0.5cm}
\begin{theorem}
\label{TH_dis}
There exists an optimal time switching (TS) policy minimizing the number of re-transmissions until successful decoding which only harvests energy or accumulates information at an arbitrary time slot.
\end{theorem}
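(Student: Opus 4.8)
The plan is to show that at every state the minimizing action can be taken in $\{0,1\}$, and then to assemble these pointwise choices into an optimal stationary policy.

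\emph{Step 1 (Monotonicity of the value).} First I would establish that the optimal mean time to absorption $k_{b,m}$ is non-increasing in the battery level $b$ and non-increasing in the accumulated information $m$: extra energy or extra mutual information can only move an absorbing state $\{b\ge E_d,\ m\ge R_1\}$ closer. This follows by induction on the value-iteration iterates of the first-step Bellman recursion $k_{b,m}=\min_{\rho}\Phi(\rho)$ underlying (\ref{k_pi}), since the transitions (\ref{I}) and the absorption set are both monotone. This fact, together with the staircase identity $k_{b,R_1}=i/\lambda$ of Lemma~\ref{lemma1}, are the only structural properties of $k$ I will need.

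\emph{Step 2 (One-slot comparison).} Fix a transient state $(b,m)$ with $b\ge 1$ and $m<R_1$. For $\rho\in[0,1)$ let $\Phi(\rho)$ denote the right-hand side of (\ref{k_pi}) with $k_{x,y}$ the optimal value; then $\Phi(0)$ is exactly (\ref{kalpha}), and for the pure-EH action I write $\Phi^{H}=1+\lambda k_{b+e,m}+(1-\lambda)k_{b,m}$, noting that the transceiver unit is \emph{not} spent when $\rho=1$. The goal is $\Phi(\rho)\ge\min\{\Phi(0),\Phi^{H}\}$ for every $\rho\in(0,1)$. For the BAD-channel term this is immediate: $R^{L}(\rho)\le R^{L}(0)=R_2$, so by Step~1, $k_{b-1,\,m+R^{L}(\rho)}\ge k_{b-1,\,m+1}$, i.e.\ a split never helps in a BAD slot relative to $\rho=0$. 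The work is in the GOOD-channel term, where the split keeps $\rho e$ of the harvested energy but accumulates only $R^{H}(\rho)<R_1$ bits; I would split into cases according to whether the split completes the information in a GOOD slot. If $m+R^{H}(\rho)\ge R_1$, I compare against $\Phi^{H}$. If $m+R^{H}(\rho)<R_1$, I compare against $\Phi(0)$: here Lemma~\ref{lemma1} is essential, because $k_{\cdot,R_1}$ rises by $1/\lambda$ over energy windows of width $e$, so the $\rho e<e$ units the split retains can save at most one such step, while ending a GOOD slot with $m+R^{H}(\rho)<R_1$ still leaves the information strictly short; monotonicity then forces $\Phi(\rho)\ge\Phi(0)$.

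\emph{Step 3 (Assembly).} For this absorbing MDP there is a stationary optimal policy $\rho^{*}(b,m)$ whose value function equals the optimal $k_{b,m}$, which is monotone by Step~1, so the hypotheses of Step~2 hold with the tail policy taken to be $\rho^{*}$ itself. Hence at any state where $\rho^{*}\in(0,1)$, Step~2 shows that $\rho=0$ or $\rho=1$ attains the same Bellman value; replacing $\rho^{*}$ by that endpoint action at every such state yields a stationary policy that still attains the Bellman optimum at every state, hence is optimal, and uses only $\rho\in\{0,1\}$. The main obstacle is the GOOD-channel comparison in Step~2: since raising $\rho$ trades information for energy, neither extreme action dominates a fractional split pointwise, so the argument cannot rely on any smooth convexity or concavity of $k$ (which in fact fails, $k_{\cdot,R_1}$ being a staircase) and must instead exploit the exact step structure of Lemma~\ref{lemma1} to show that the energy a partial split keeps can never buy back the information it forgoes, and symmetrically that the transceiver unit saved by pure EH cannot be matched by a split.
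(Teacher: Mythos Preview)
Your plan uses the same two ingredients as the paper---monotonicity of $k_{b,m}$ in $m$ and the staircase identity of Lemma~\ref{lemma1}---but it takes a detour the paper avoids. The paper does \emph{not} compare a fractional split against both endpoints. It compares $\rho\in(0,1)$ only against $\rho=0$ and shows $k^{\boldsymbol{\pi}^S}_{b,m}\ge k^{\boldsymbol{\pi}^D}_{b,m}$ in one stroke: monotonicity in $m$ replaces $m+R^{H}(\rho)$ by its cap $R_1$ and $m+R^{L}(\rho)$ by $m+1$, giving the bound in (\ref{eq:bound}); then, because $b-1$ and $b-1+\rho e$ differ by less than $e$, Lemma~\ref{lemma1} collapses $k_{b-1+\rho e,\,R_1}$ to $k_{b-1,\,R_1}$, and the right-hand side becomes exactly (\ref{kalpha}). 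There is no case split on whether the information saturates, and pure EH ($\Phi^{H}$ in your notation) is never invoked.

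Your Step~2 therefore introduces a difficulty the paper sidesteps. In your Case~1 you assert $\Phi(\rho)\ge\Phi^{H}$ without argument, but after a GOOD slot the split lands at $(b-1+\rho e,\,R_1)$ while pure EH lands at $(b+e,\,m)$, and neither of these states dominates the other in both coordinates; monotonicity alone cannot rank them, and there is no evident reason the inequality should hold. The paper's point is precisely that this comparison is unnecessary: even when a split would saturate the information in a GOOD slot, $\rho=0$ also saturates it (since $m+R_1\ge R_1$), and the $\rho e$ units of extra battery the split would have kept buy nothing by Lemma~\ref{lemma1}. So the clean fix to your proposal is to drop the case analysis and the $\Phi^{H}$ branch altogether and argue $\Phi(\rho)\ge\Phi(0)$ uniformly; what remains is then essentially your Case~2, which is the paper's proof. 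Your Step~1 and Step~3 are fine and line up with what the paper does (the paper states the monotonicity in $m$ as ``easy to verify'' and leaves the assembly implicit).
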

\begin{proof}
Assume that at time slot $t$ the system is at state $(b,\ m)$. Consider policy $\boldsymbol{\pi}^S$ which always chooses $0< \rho<1$. Hence, it follows that $R^H(\rho)< R_1$, $R^L(\rho)< 1$ and, from (\ref{I}), we have $I(t)\leq R_1$. Also, it is easy to verify that for any $b$, we have $k_{b,m_1}\leq k_{b,m_2}$ whenever $m_1\geq m_2$. Thus, a lower bound on $k^{\boldsymbol{\pi}^S}_{b,m}$ in (\ref{k_pi}) can be established as,
\begin{align}
k^{\boldsymbol{\pi}^S}_{b,m}\geq 1 + \lambda k_{b-1+\rho e,R_1} + (1-\lambda) k_{b-1,m+1}.\label{eq:bound}
\end{align}

Furthermore, since $b-1<b-1+\rho\cdot e<b-1+e$, from Lemma \ref{lemma1}, we know that $k_{b-1+\rho e,R_1}=k_{b-1,R_1}$. Hence, the lower bound in (\ref{eq:bound}) is exactly the same as $k^{\boldsymbol{\pi}^D}_{b,m}$ given in (\ref{kalpha}), i.e., $k^{\boldsymbol{\pi}^D}_{b,m}\leq k^{\boldsymbol{\pi}^S}_{b,m}$.
\end{proof}

Theorem \ref{TH_dis} proves that a time switching (TS) policy can achieve the minimum mean time to absorption. As a result, the state space of the discrete Markov chain associated with the optimal TS policy is  $b = 0,1,\ldots,\infty$\footnote{Note that in reality the capacity of the battery is limited to $B_{max}$, resulting in total $B_{max}\cdot R_1$ number of states. }, and $m = 0,1,\ldots,R_1$. Thus, we have converted an uncountable state MDP with continuous actions (i.e., $\rho(b,m)\in [0,1]$) into a countable state MDP with binary decisions (i.e., $\rho(b,m)\in \{0,1\}$). Hence, the curse of dimensionality is lifted from the problem and we can use VIA to obtain the optimal TS decisions at each state for the reduced problem. Also, since the number of states is reduced dramatically, we can choose $\beta\rightarrow 1$ to approximate the average reward instead of the discounted reward. Note that applying the VIA to the original problem is not possible in feasible time due to the extreme complexity of the problem originated from uncountable states.

The TS structure of the optimal policy also encourages us to propose simple heuristic policies which is suitable for EH devices lacking the necessary computation power. Hence, we propose three simple to implement heuristic policies utilizing the TS structure. These policies are as follows:

\begin{itemize}
\item Battery First (BF): the receiver harvests energy first until it acquires $m$ units of energy and then starts accumulating mutual information.
\item Information First (IF): the receiver always accumulates mutual information 
unless $B=0$ or $I=R_1$.
\item Coin Toss (CT): the receiver harvests energy when $B=0$ or $I=R_1$, while it accumulates mutual information when $B=\geq m+1$. Otherwise, it tosses a fair coin to choose between EH or ID.
\end{itemize}
In the following, we evaluate the performance of the optimal policy obtained by solving (\ref{actionvalue}) and compare the result to that of heuristic policies.

\section{Numerical Results}\label{Results}
%\vspace{-0.0cm}

In this section, we evaluate the minimum expected number of re-transmissions  by maximizing the value function defined in (\ref{actionvalue}) by the VIA and compare the values by those obtained by BF, IF and CT policies. To be able to approximate mean value by VIA, we choose $\beta = 1-10^{-17}$. We calculate the expected number of re-transmissions by Monte Carlo (MC) simulations. We run Monte Carlo simulations for $10^7$ iterations and evaluate the sample mean. 

Table \ref{son} summarizes the expected number of re-transmissions for $R_1 = 10$, $e=1$, $\lambda=0.5$ and $E_d=5$ with respect to $R_2$ associated with different policies. It can be seen from Table \ref{son} that all policies achieve with a very close approximation, the same expected number of re-transmissions for common system parameters. 

The effect of channel quality on the expected number of re-transmission for $R_2 = 5$, $R_1=10$, $E_d=5$ and $e=2$ with respect to $\lambda$ is summarized in Table \ref{son_vs_lamda}. As expected, it can be seen that the expected number of re-transmissions decreases as the channel quality improves. This is because as the channel quality improves, the probability of harvesting energy and accumulating $R_1$ bits of mutual information also increases. Again, it can be seen that all policies have the same performance independent of the value of $\lambda$.

The results presented in Table \ref{son} and \ref{son_vs_lamda} show that the optimal policy is not unique. To investigate this, we optimize $\rho$ values by VIA algorithm at each state $(b,m)$ for $R_1 = 5$, $R_2=2$, $e=2$, $\lambda=0.5$ and $E_d=5$ and represent the optimal $\rho$ values in Figure \ref{fig:BF} and \ref{fig:IF}. Note that Figure \ref{fig:BF} and \ref{fig:IF} that are obtained by VIA, happen to be exactly the same as the BF and IF policies, respectively, where \emph{black holes} represent absorbing states, \emph{blue squares} represent $\rho=1$, and \emph{red diamonds} represent $\rho=0$.  Optimality of both Figure \ref{fig:BF} and \ref{fig:IF} means that $V^{BF}(b,m)=V^{IF}(b,m)$ for every $b=0,1,\ldots$ and $m=0,\ldots,R_1$.
By comparing Figure \ref{fig:BF} and \ref{fig:IF} it can be seen that the optimal policy should harvest energy whenever $b=0$ or $m=R_1$, and it should accumulate mutual information whenever $b\geq E_d +1$. Also, choosing between $\rho=0$ or $\rho=1$ does not alter the minimum expected number of re-transmission, whenever $1\leq b\leq E_d$ and $0\leq m \leq R_1-1$, i.e., $V_{0}(b,m)=V_{1}(b,m)$ for $1\leq b\leq E_d$ and $0\leq m \leq R_1-1$. Consequently, BF, IF, CT, and optimal policies achieve the same minimum expected number of re-transmissions.

% The results presented in Table \ref{son} and \ref{son_vs_lamda} implies that the optimal policy harvests energy whenever $B = 0$ or $I=R_1$ and accumulates mutual information whenever $B\geq m+1$. For the rest of the states it does not matter what the receiver does, as long as, it does not split the received RF signal. To confirm this observation, let us define two new policies, namely, \emph{Bernoulli} (BL) and harvest-first-store-later (HFSL$_{E}$) policies. 

% In the BL policy, the receiver harvests energy whenever $b=0$, or $m = R_1$, or $b\geq m+1$ and otherwise, choses EH with probability $p$ and ID with probability $1-p$. Figure \ref{fig:p} illustrates the effect of $p$ on the expected number of re-transmissions by BL policy which confirms our speculations about the structure of the optimal policy that it does not make a difference between choosing $\rho=1$ or $\rho=0$ whenever $1\leq b\leq m$ and $0\leq m\leq R_1-1$.

% In the HFSL$_{E}$ policy, the receiver harvests energy first until it acquires $E_d+E-1$ units of energy and then starts accumulating mutual information, i.e., whenever $B\geq E_d +E$. The effect of $E$ on the expected number of re-transmissions by HFSL$_{E}$ is depicted in Figure \ref{fig:E}. It can be seen that the optimal value is achieved by accumulating mutual information whenever $B\geq m+1$.

\begin{table}[t]
\centering
\caption{Expected number of re-transmissions for $R_1 = 10$, $e=1$ and $E_d=5$ vs. $R_2$}
\label{son}
\begin{tabular}{lllllllllll}
               & $R_2=1$  & $R_2=2$  & $R_2=3$  & $R_2=4$  & $R_2=5$     \\
VIA             & 15.9910 & 15.8103 & 15.6235 & 15.2490 & 14.4992  \\
BF             & 15.9938 & 15.8116 & 15.6259 & 15.2504 & 14.4999  \\
IF             & 15.9941 & 15.8143 & 15.6245 & 15.2508 & 14.4987  \\
CT             & 15.9966 & 15.8140 & 15.6266 & 15.2491 & 14.5020 
\end{tabular}
\end{table}

\begin{table}[t]
\centering
\caption{Expected number of re-transmissions for $R_1 = 10$, $R_2=5$, $e=2$ and $m=5$ vs. $\lambda$}
\label{son_vs_lamda}
\begin{tabular}{lllllllllll}
               & $\lambda=0.1$  & $\lambda=0.2$  & $\lambda=0.3$  & $\lambda=0.4$  & $\lambda=0.5$   \\
VIA             & 40.8904 & 20.7979 & 14.0320 & 10.5985 & 8.4989  \\
BF             & 40.8920 & 20.7962 & 14.0337 & 10.6002 & 8.4995  \\
IF             & 40.8978 & 20.7960 & 14.0331 & 10.5991 & 8.5002  \\
CT             & 40.8961 & 20.8006 & 14.0333 & 10.5973 & 8.4986 
\end{tabular}
\end{table}

\begin{figure}[ht]
  \centering
    \includegraphics[scale=.5]{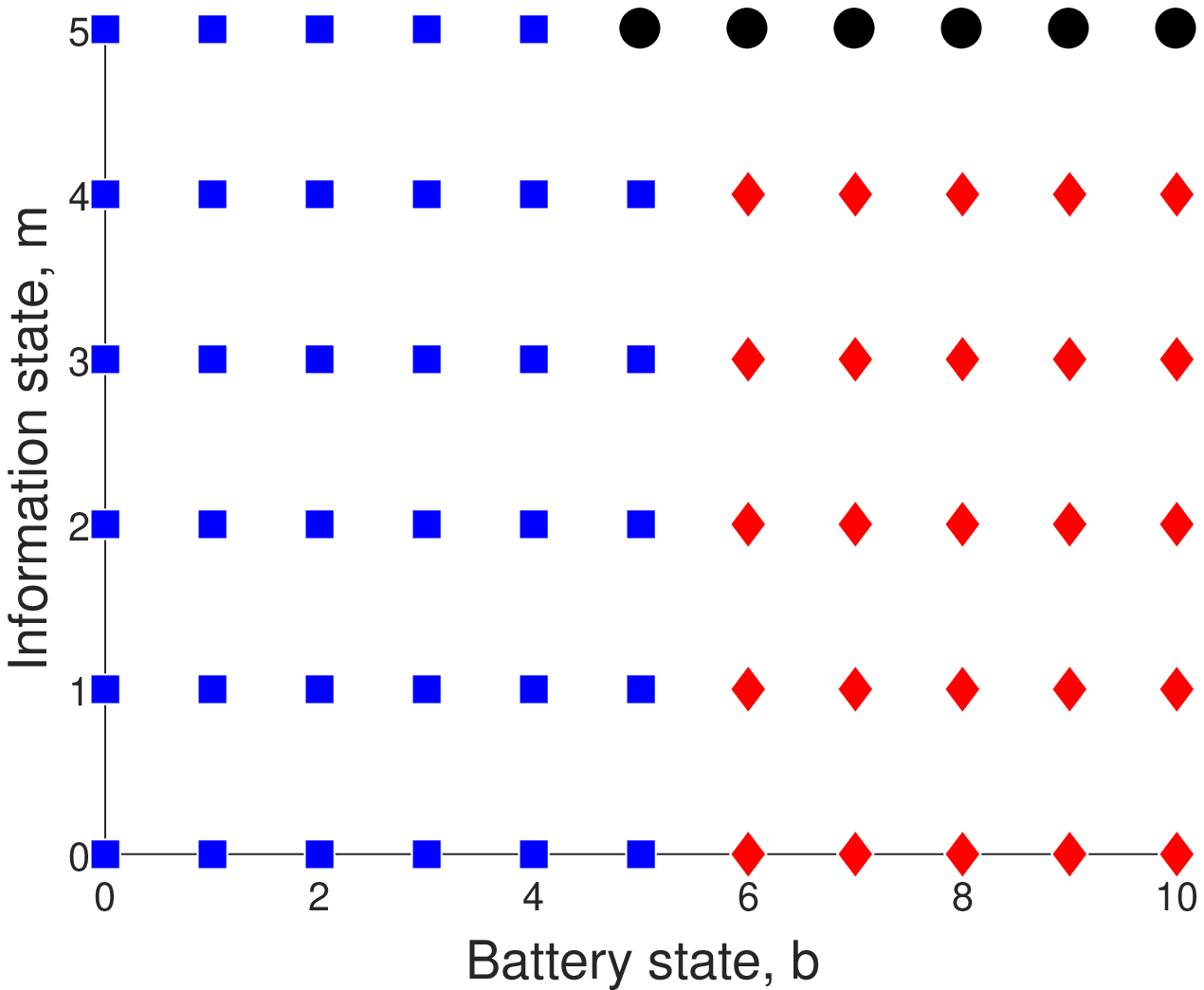}
		  \caption{Optimal $\rho$ values obtained by VIA resembling the BF policy.}
			\label{fig:BF}
\end{figure}

\begin{figure}[ht]
  \centering
    \includegraphics[scale=.5]{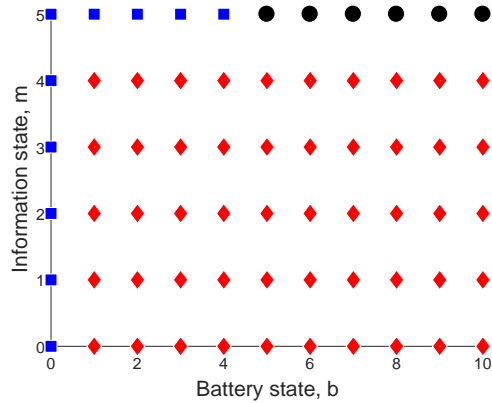}
		  \caption{Optimal $\rho$ values obtained by VIA resembling the IF policy.}
			\label{fig:IF}
\end{figure}

%\vspace*{-0.2cm}
\section{Conclusion}\label{Conclusion}
%\vspace*{-0.2cm}
We analyzed a point-to-point wireless link employing HARQ for reliable transmission, where the receiver can only empower itself via the transmitter's RF signal. We modeled the problem of optimal power splitting by a Markovian framework, and proved that the optimal policy should be a TS policy and as a consequence, we converted a two dimensional uncountable state Markov chain into a two dimensional countable state Markov chain. Then, we used VIA to minimize the expected number of re-transmissions and through numerical results, we showed that the optimal policy is not unique. In the future, we aim  to analytically characterize the structure of the optimal policy and to develop a low-complexity algorithm achieving the corresponding optimal performance. Additionally, we will extend the problem to the case of time-correlated channels.

\bibliographystyle{IEEEtran}
\bibliography{ref}

%\printbibliography[title={References}]
\end{document}